\crefname{section}{Section}{Sections}
\crefname{algorithm}{Algorithm}{Algorithms}
\crefname{equation}{Equation}{Equations}
\crefname{figure}{Figure}{Figures}
\crefname{table}{Table}{Tables}
\crefname{theorem}{Theorem}{Theorems}
\crefname{lemma}{Lemma}{Lemmas}
\newtheorem{theorem}{Theorem}
\newtheorem{lemma}{Lemma}
\newtheorem{remark}{Remark}
\newcommand{\mcf}{misalignment coefficient\xspace}
\newcommand{\mcfs}{misalignment coefficients\xspace}
\newcommand{\Mcf}{Misalignment coefficient\xspace}
\newcommand{\inv}[1]{\frac{1}{#1}}
\newcommand{\norm}[1]{\left\lVert#1\right\rVert}
\newcommand{\dd}[2][t]{\frac{d}{d #1}#2}
\newcommand{\nvertices}{N}
\newcommand{\ndim}{D}
\newcommand{\adj}[1]{\gamma_{#1}}
\newcommand{\tvel}[2]{\vec{v}_{#1}(#2)}
\newcommand{\tnvel}[2]{\frac{\tvel{#1}{#2}}{\norm{\tvel{#1}{#2}}}}
\newcommand{\coupling}{\alpha}
\newcommand{\vdeg}[1]{k_{#1}}
\newcommand{\adeg}{\langle k \rangle}
\newcommand{\tmis}[2]{H_{#1}(#2)}
\newcommand{\pin}{p_{\text{in}}}
\newcommand{\vit}{\tvel{i}{t}}
\newcommand{\svel}[1]{\vec{x}_{#1}}
\newcommand{\tsvel}[2]{\vec{x}_{#1}(#2)}
\newcommand{\snorm}[1]{z_{#1}}
\newcommand{\tsnorm}[2]{z_{#1}(#2)}
\newcommand{\sit}{\tsvel{i}{t}}
\newcommand{\sjt}{\tsvel{j}{t}}
\newcommand{\sqt}{\tsvel{q}{t}}
\newcommand{\nit}{\tsnorm{i}{t}}
\newcommand{\ki}{\vdeg{i}}
\newcommand{\gij}{\adj{ij}}
\newcommand{\giq}{\adj{iq}}
\newcommand{\si}{\svel{i}}
\newcommand{\sj}{\svel{j}}
\newcommand{\sq}{\svel{q}}
\newcommand{\nni}{\snorm{i}}
\newcommand{\neii}{\svel{\mathcal{N}_i}}
\begin{document}

\title{Network community detection via iterative edge removal in a flocking-like system}

\author{%
  Filipe Alves Neto Verri,
  Roberto Alves Gueleri,
  Qiusheng Zheng,
  Junbao Zhang,
  Liang Zhao
  \IEEEcompsocitemizethanks{%
    \IEEEcompsocthanksitem F.A.N. Verri and R.A. Gueleri are with
    the Institute of Mathematics and Computer Science, University of São Paulo,
    São Carlos, Brazil.
  \IEEEcompsocthanksitem Q. Zheng and J. Zhang are with the
    School of Computer Science, Zhongyuan University of Technology, Zhengzhou,
    China.
  \IEEEcompsocthanksitem L. Zhao is with
    Ribeirão Preto School of Philosophy, Science and Literature, University of São Paulo,
    Ribeirão Preto, Brazil.\protect\\
    E-mail: zhao@usp.br
  }
}


\maketitle


\begin{abstract}
  We present a network community-detection technique based on properties that
  emerge from a nature-inspired system of aligning particles.  Initially, each
  vertex is assigned a random-direction unit vector.  A nonlinear dynamic law is
  established so that neighboring vertices try to become aligned with each other.
  After some time, the system stops and edges that connect the least-aligned
  pairs of vertices are removed.  Then the evolution starts over without the
  removed edges, and after enough number of removal rounds, each community
  becomes a connected component.  The proposed approach is evaluated using
  widely-accepted benchmarks and real-world networks.  Experimental results
  reveal that the method is robust and excels on a wide variety of networks.
  Moreover, for large sparse networks, the edge-removal process runs in
  quasilinear time, which enables application in large-scale networks.
\end{abstract}

\begin{IEEEkeywords}
  Community detection, modularity optimization, flocking formation, complex
  networks.
\end{IEEEkeywords}



\section{Introduction}
\label{sec:introduction}

\IEEEPARstart{T}{he} study of complex networks attracts many researches from
different areas.  Networks are graphs that represent the relationships among
individuals in many real-world complex systems.  Each vertex is an object of
study, and an edge exists if its endpoints interact
somehow~\cite{verri:2016,Pizzuti2017}.

A community structure is commonly found in many real networks, such as social
networks~\cite{Cai2015,Lv2016}, oil-water flow
structure~\cite{gao:2015}, human mobility networks~\cite{thiemann:2010}, spatial
structure of urban movement in large cities~\cite{zhong:2014}, corporate elite
networks in the fields of politics and economy~\cite{heemskerk:2016}, and many
more.  Formally, communities are groups of densely connected
vertices~\cite{clauset:2004,quiles:2016,silva:2016}, while connections between
different communities are sparser.

The problem of community detection is related to the graph
partition problem in graph theory. Finding the optimal partition is
an NP-hard problem in most cases~\cite{fortunato:2010}, thus making room for
many researches to find out sub-optimal solutions in feasible time.
As a result, various approaches for community detection have been developed, including
spectral properties of graph matrices~\cite{mitrovic:2009,Agliari2017},
particle walking and competition in networks~\cite{silva:2016, verri:2016}, and
many evolutionary or bio-inspired processes~\cite{Pizzuti2017}.


Newman and Girvan~\cite{newman:2004} have proposed a metric called
\emph{modularity}, whose purpose is to quantify how a network is likely to
display community structure~\cite{lancichinetti:2011}.  It does not make any
assumption on \textit{a-priori} knowledge of the network, e.g., vertex labels.
One of the algorithms employed in this paper, both for comparison and for
complementary stage, is called ``Cluster Fast Greed'', or just CFG, and it is
based on a greedy optimization of modularity~\cite{clauset:2004,
lancichinetti:2009}. It performs really fast, in time $O(md \, \text{log} \,
n)$, where $d$ is the depth of the dendrogram describing the network's
hierarchical community structure returned by the algorithm. In cases where $m
\sim n$ and $d \sim \text{log} \, n$, it runs in quasilinear time: $O(n \,
\text{log}^2 n)$. Another algorithm employed here is the so-called ``Louvain'',
which is based on the optimization of modularity too~\cite{blondel:2008,
lancichinetti:2009}. The authors advocate in favour of the computation time,
which makes the algorithm applicable on huge networks.

In this paper, we propose a bio-inspired community-detection method that is
divided in two alternating stages. The first one is a nonlinear collective complex system
that takes inspiration from the flocking formation in nature~\cite{nagai:2015}.
In the second stage, we measure the misalignment of each pair of vertices that
are directly connected and remove a fraction of edges that result the
highest misalignments.
Flocks are groups of individuals that move in a coordinated fashion.  This
coordinated motion emerges even in the absence of any leader, what makes it a
self-organizing phenomenon.  In our model, each vertex is an aligning
particle. Therefore, each vertex carries a
velocity vector, pointing to a random direction at the beginning and, as the
process evolves, progressively turns itself toward the same direction of its
neighboring vertices.  As a simplification of the process, the vertices
actually do not move, so the term
``velocity'' is just an analogy to the direction of motion in flocking systems.
The dynamical process is suspended after a certain number of iterations, then
the second stage takes place.  The edge that connects the least aligned pair of
vertices is supposed to link distinct communities.
After enough number of removal cycles, most of the inter-community edges are
expected to be removed, thus the network becomes partitioned into disconnected
components.

Our model is evaluated using widely-accepted benchmarks and real-world networks.
It not only excels on many different scenarios but also has very low
computational cost.  As a result, the research shows potential for a broad range
of applications, including big data.

The rest of this paper is organized as follows.  \Cref{sec:model,sec:theory} describe
the proposed model and present analytical results of the system.
In \cref{sec:experimental}, computer
simulations illustrate the process and assess its performance.  Finally,
\cref{sec:conclusions} discusses and concludes this paper.


\section{Model description}
\label{sec:model}

Here we present the iterative edge-removal approach in detail. As mentioned
before, the process is divided in two alternating stages: the particle-alignment
dynamical process and the edge-removal stage itself.

\subsection{The particle-alignment dynamical model}

Given an undirected network, free of self-loops and multiple edges, we let each
vertex $i \in \{1, 2, \dots, \nvertices\}$ be a particle in the collective
dynamical system.  Therefore, each vertex carries a velocity vector
$\mathbf{v}_i(t) \in \mathbb{R}^\ndim$, which points to some direction in a
multi-dimensional space.  The vertices actually do not move at all, so the term
``velocity'' is just an analogy to the direction of motion in flocking systems,
whose moving particles try to align themselves to their neighbors, i.e., try to
take the same direction of motion.

Edges of the network define the neighborhood of each particle $i$, i.e., those
particles to which $i$ can directly interact.  We denote $\adj{ij} = 1$ if
vertices $i$ and $j$ interact, and $\adj{ij} = 0$ otherwise.  The neighborhood
is unchangeable throughout the dynamical process, thus the interaction network
is the same all the time.

Initially, each particle $i$ is assigned a random initial velocity
$\mathbf{v}_i(0)$, which is a unit vector pointing to a random direction. A
three-dimensional space ($\ndim= 3$) is employed for all the experiments
presented in this paper. However, how dimensionality impact on the final
results has to be investigated.

The nonlinear dynamical system is governed by the expression
\begin{multline}
  \label{eq:v}
  \tvel{i}{t+1} = \tnvel{i}{t} + \\
  \coupling \inv{\vdeg{i}} \sum_{j=1}^{\nvertices}{
    \adj{ij}\left(\tnvel{j}{t} - \tnvel{i}{j}\right)
  }\text{,}
\end{multline}
where $\norm{\cdot}$ denotes the Euclidean norm.  The nonlinearity of the
dynamical system is introduced by the velocity normalization. Parameter $t$ is
the iteration index (time), which starts from zero. Parameter $\coupling > 0$
defines how fast the velocities are updated.  The value $\vdeg{i}$ is the degree
of vertex $i$, i.e., the number of neighbors it has,
\begin{equation}
  \vdeg{i} = \sum_{j=1}^{\nvertices}{\adj{ij}}\text{.}
\end{equation}


Although the particles do not move --  there is even no position defined to
them --, by using the unit vector we enforce something
analogous to the ``constant-speed motion'' that each particle would perform. In addition,
it enforces that at any given time-step $t$, all the particles would move at the same
speed. Constant-speed motion is a property commonly modelled in studies of self-propelled
particles~\cite{vicsek:2012, nagai:2015}.
It is also responsible for the rotational symmetry breaking that makes the set of
particles agree on the same velocity direction.  By not enforcing such normalization, all
velocities would vanish to zero --  or close to zero  -- due to their random initial
distribution, making opposite vectors neutralize each other.
We show in \cref{sec:theory} that particles in the same
connected component are most likely to align.  As a result, velocity vectors of particles from
different communities will converge to the same value.  However, as presented in
\cref{sub:illustration}, particles in the same community tend to align from
different direction, which motivates the removal of edges some time before the
convergence.

\subsection{The iterative edge-removal process}

We define the \emph{\mcf} $\tmis{ij}{t}$ as the level of disorder in terms of
velocity-vectors' misalignment between nodes $i$ and $j$.  Such index is
mathematically expressed by
\begin{equation}
  \tmis{ij}{t} = d_1\!\left(\tnvel{i}{t}, \tnvel{j}{t}\right)
\end{equation}
where $d_1(\cdot,\cdot)$ is the $L_1$ distance between vectors. In other words,
$\tmis{ij}{t}$ is just the city-block distance between the normalized velocity
vectors of the vertices $i$ and $j$.

As we will see through the experiments presented in the next section,
\mcfs of edges that connect distinct communities decrease slower than
those connecting vertices inside the same community. It builds the basis of our
iterative edge-removal approach: removing the edges with highest \mcfs is
likely to remove inter-community edges, thus making the distinction between
different communities clearer and clearer over removal cycles. So the overall
edge-removal process consists of the following steps:
\begin{enumerate}
  \item After assigning random initial velocity vectors to every vertex, run the
    dynamical particle-aligning model, as described in the previous section, up
    to some number of time-steps. (Number of steps is discussed in \cref{sub:misalign}.)
  \item Once the dynamics is interrupted, collect the misalignment coefficient
    of each pair of interacting vertices. It
    is possible to run the dynamical model (Step~1) multiple times, using
    different random assignments to the set of initial velocities. In this case,
    the final coefficient of each edge can be just the summation of individual
    coefficients collected after each run.
  \item Remove the edges with the highest \mcfs, then go to Step~1 and run the
    dynamical model again, but this time using the new network without the
    removed edges.
\end{enumerate}
Steps~1, 2, and 3, together, form what we call ``cycle'' or ``round''. In this
paper, we study the influence of running the dynamical model multiple times per
cycle. We also study the influence of removing different numbers of edges per
cycle.


\section{Theoretical results on flocking alignment}
\label{sec:theory}

We also present analytical and argumentative study of the dynamics given
by \cref{eq:v}.

\subsection{Domain of the velocity vectors}

A state at time $t + 1$ in which $\norm{\vit} = 0$ for any $i$ is a singularity.
In order to deal with this problem, we need to restrict the parameter $\alpha$.
Once $\alpha > 0$ by definition, we show that, if $\alpha < 0.5$ and
$\norm{\tvel{i}{0}} = 1$ for all $i$, then $0 < \norm{\vit} \leq 1$ for all $i$
and $t > 0$.  Thus, for any reasonable $\alpha$ and the initial conditions
proposed in this paper, the velocity vectors are nonzero vector with norm less
than or equal to $1$.
%

Given the restrictions of the parameters and initial conditions, the following
lemmas prove that $\norm\vit \in (0, 1]$ for all $i$ at any time $t$,
guaranteeing the expected behavior of the model.

\begin{lemma}
  Given that $\norm{\tvel{i}{0}} \leq 1$, for all particle $i$, the norm of the velocity
  of every particle will not surpass $1$.
\end{lemma}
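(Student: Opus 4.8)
The plan is to rewrite the update rule in \cref{eq:v} so that $\tvel{i}{t+1}$ appears explicitly as a convex combination of unit vectors, after which the bound drops out of the triangle inequality. First I would split the summation and use the degree identity $\sum_{j=1}^{\nvertices} \adj{ij} = \vdeg{i}$ to absorb the self-term. Separating the $\tnvel{j}{t}$ and $\tnvel{i}{t}$ contributions gives
\begin{equation*}
  \tvel{i}{t+1} = (1 - \coupling)\,\tnvel{i}{t} + \coupling \inv{\vdeg{i}} \sum_{j=1}^{\nvertices} \adj{ij}\,\tnvel{j}{t}\text{,}
\end{equation*}
since the collected $-\coupling \inv{\vdeg{i}} \adj{ij}\,\tnvel{i}{t}$ terms amount to $-\coupling\,\tnvel{i}{t}$.

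The key observation is then that the coefficients $(1-\coupling)$ and $\coupling \inv{\vdeg{i}} \adj{ij}$ are all nonnegative — using $0 < \coupling < 1$, which is guaranteed by the standing assumption $\coupling < 0.5$ — and that they sum to exactly $1$, because $(1-\coupling) + \coupling \inv{\vdeg{i}} \sum_j \adj{ij} = (1-\coupling) + \coupling = 1$. Hence $\tvel{i}{t+1}$ is a genuine convex combination of the normalized vectors $\tnvel{i}{t}$ and $\{\tnvel{j}{t}\}_j$, each of which has unit norm by construction. Applying the triangle inequality and using $\norm{\tnvel{i}{t}} = \norm{\tnvel{j}{t}} = 1$ immediately yields $\norm{\tvel{i}{t+1}} \leq 1$.

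I would emphasize that this bound holds for every $t \geq 0$ independently of the norm at step $t$: once the velocities at time $t$ are normalized, the right-hand side only ever sees unit vectors, so the norm at $t+1$ can never exceed $1$. The base case $t = 0$ is covered directly by the hypothesis $\norm{\tvel{i}{0}} \leq 1$, so a one-line induction (or simply the preceding observation applied at each step) closes the argument.

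The main subtlety I anticipate lies not in the inequality itself but in ensuring the normalizations $\tnvel{i}{t}$ are well-defined, i.e.\ that no velocity vanishes at or before the step under consideration. For this upper-bound statement I would treat the dynamics as well-posed and take the normalizations as given, deferring the strict positivity $\norm{\tvel{i}{t}} > 0$ to the companion lower-bound lemma, where the stronger restriction $\coupling < 0.5$ genuinely enters.
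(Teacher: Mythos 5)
Your proof is correct and takes essentially the same route as the paper: the same rewriting of the update rule as the convex combination $(1-\coupling)\,\tnvel{i}{t} + \coupling\inv{\vdeg{i}}\sum_j \adj{ij}\,\tnvel{j}{t}$, followed by the triangle inequality and unit norms of the normalized vectors. Your added observation that the inductive hypothesis $\norm{\tvel{i}{t}} \leq 1$ is never actually used (only well-definedness of the normalizations matters, which the companion lemma supplies) is a valid minor sharpening of the paper's induction.
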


\begin{proof}
  For $t = 0$, $\norm{\tvel{i}{0}} \leq 1$ from the statement.

  Assume $\norm{\tvel{i}{t}} \leq 1$ for some $t$. Then,
  \begin{multline}
    \norm{\tvel{i}{t+1}} = \norm{
      \left(1 - \alpha\right) \tnvel{i}{t} +
      \alpha \inv{\vdeg{i}} \sum_{j}{ \adj{ij} \tnvel{j}{t} }
    } \leq {} \\
    \left(1 - \alpha\right) \norm{ \tnvel{i}{t} } +
    \alpha \inv{\vdeg{i}} \sum_{j}{ \adj{ij} \norm{ \tnvel{j}{t} } } =\\
    \left( 1 - \alpha \right) + \alpha = 1 \implies
    \norm{\tvel{i}{t+1}} \leq 1
  \end{multline}

  Thus, by induction, $\norm{\tvel{i}{t}} \leq 1$ for all $i$, $t$.
\end{proof}

\begin{lemma}
  Given that $\norm{\tvel{i}{0}} > 0$, for all particle $i$, the norm of the velocity of
  every particle is strictly greater than $0$ for any time $t>0$.
\end{lemma}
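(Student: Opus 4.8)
The plan is to prove the claim by induction on $t$, reusing the rewriting of the dynamics that appears in the proof of the previous lemma. First I would recall that \cref{eq:v} can be written as
\begin{equation*}
  \tvel{i}{t+1} = (1 - \coupling)\,\tnvel{i}{t} + \coupling \inv{\vdeg{i}} \sum_{j}{\adj{ij}\,\tnvel{j}{t}}\text{,}
\end{equation*}
so that the new velocity splits into a ``self'' term and a ``neighbor-average'' term. The central observation is that, because every $\tnvel{j}{t}$ is a unit vector by construction, the two terms have controllable norms: the self term has norm exactly $1 - \coupling$, while the neighbor term has norm at most $\coupling \inv{\vdeg{i}} \sum_{j}{\adj{ij}} = \coupling$, since $\sum_j \adj{ij} = \vdeg{i}$.

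The heart of the argument is then a single application of the reverse triangle inequality:
\begin{equation*}
  \norm{\tvel{i}{t+1}} \geq (1-\coupling)\,\norm{\tnvel{i}{t}} - \coupling \inv{\vdeg{i}} \sum_{j}{\adj{ij}\,\norm{\tnvel{j}{t}}} = (1-\coupling) - \coupling = 1 - 2\coupling\text{.}
\end{equation*}
Since the restriction $\coupling < 0.5$ imposed at the start of this subsection gives $1 - 2\coupling > 0$, this already yields a strictly positive lower bound on $\norm{\tvel{i}{t+1}}$ that is uniform in both $i$ and $t$.

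The only subtlety --- and the step I expect to require the most care --- is that the computation above presupposes that every normalized velocity $\tnvel{j}{t}$ actually exists, i.e.\ that $\norm{\tvel{j}{t}} > 0$ for all $j$. This is precisely the statement being proved, so the argument must be organized as an induction. For the base case $t = 0$, the hypothesis $\norm{\tvel{i}{0}} > 0$ guarantees that every $\tnvel{i}{0}$ is well-defined. For the inductive step, assuming $\norm{\tvel{j}{t}} > 0$ for all $j$ so that the right-hand side of \cref{eq:v} is well-defined, the displayed bound gives $\norm{\tvel{i}{t+1}} \geq 1 - 2\coupling > 0$ for every $i$; hence the normalizations remain valid at time $t+1$, closing the induction and establishing $\norm{\tvel{i}{t}} > 0$ for all $i$ and all $t > 0$. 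I would stress that the whole argument hinges on the parameter restriction $\coupling < 0.5$: without it the self and neighbor contributions could cancel and the lower bound would collapse to a value that is not strictly positive.
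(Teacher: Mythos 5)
Your proof is correct, and it is built from the same two ingredients as the paper's own proof: the rewriting of \cref{eq:v} as $\tvel{i}{t+1} = (1-\coupling)\tnvel{i}{t} + \coupling\inv{\vdeg{i}}\sum_j{\adj{ij}\tnvel{j}{t}}$, and the estimate that the self term has norm $1-\coupling$ while the neighbor-average term has norm at most $\coupling$, so that everything hinges on $2\coupling < 1$. The difference is purely one of logical packaging: the paper assumes for contradiction that $\norm{\tvel{i}{t+1}} = 0$ for some admissible $\alpha_0$, rearranges, takes norms, and derives the absurdity $1 \leq 2\alpha_0 < 1$, whereas you run the identical estimate forward through the reverse triangle inequality and conclude $\norm{\tvel{i}{t+1}} \geq 1 - 2\coupling > 0$ directly; the two arguments are essentially contrapositives of one another. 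Your direct version buys something the paper's does not state: an explicit lower bound $1-2\coupling$ on the norm, uniform in $i$ and $t$, which is strictly more information than bare positivity, and it avoids the paper's somewhat awkward quantification (``if there exists $\alpha = \alpha_0$\dots''), which reads as though positivity might depend on a choice of parameter rather than holding for every fixed $\coupling < 1/2$. You were also right to flag the well-definedness of the normalized vectors $\tnvel{j}{t}$ as the reason the argument must be organized as an induction over all particles simultaneously; the paper performs the same induction but leaves that point implicit, so your treatment is, if anything, the more careful of the two.
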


\begin{proof}
  For $t = 0$, $\norm{\tvel{i}{0}} > 0$ from the statement.

  Assume that $\norm{\tvel{i}{t}} > 0$ for some $t$.
  We show by contradiction that $\norm{\tvel{i}{t + 1}} > 0$.

  If there exists $\alpha=\alpha_0$, $0 < \alpha_0 < \inv{2}$, such that
  $\norm{\tvel{i}{t + 1}} = 0$,
  then
  \begin{multline}
    \vec{0} =
    \left(1 - \alpha_0\right) \tnvel{i}{t} +
    \alpha_0 \inv{\vdeg{i}} \sum_{j}{ \adj{ij} \tnvel{j}{t} } \implies {} \\
    - \tnvel{i}{t} = \alpha_0 \left(
      -\tnvel{i}{t} + \inv{\vdeg{i}} \sum_{j}{ \adj{ij} \tnvel{j}{t} }
    \right) \implies {} \\
    \norm{ -\tnvel{i}{t} } = \alpha_0 \norm{
      -\tnvel{i}{t} + \inv{\vdeg{i}} \sum_{j}{ \adj{ij} \tnvel{j}{t} }
    } \implies {} \\
    1 \leq \alpha_0 \left(
      \norm{ -\tnvel{i}{t} } + \inv{\vdeg{i}} \sum_{j}{ \adj{ij} \norm{\tnvel{j}{t}} }
    \right) =\\
    2 \alpha_0 < 1 \implies {} 1 < 1\text{.}
  \end{multline}

  By contradiction, such $\alpha_0$ does not exist.

  Thus, by induction, $\norm{\tvel{i}{t}} > 0$ for all $i$, $t$.
\end{proof}

\subsection{Alignment of the velocity vectors}

The core mechanism in our method is measuring small misalignments between
connected particles and deciding which edge will be removed.  A
question that rises is whether the velocity vectors converge to a single point
or not, that is, if the particles align or not.  We show that perfect alignment
of particles in the same connected component is most likely to happen.
The system would also be in equilibrium if vectors are in perfect opposition to
each other.  Such condition would need not only very specific initial velocity
vectors but also specific network configuration, thus this case is extremely
unlikely.

To perform the particle-alignment study, we use a continuous approximation of the
evolution equations,
\begin{multline}
  \dd{\vit} =
    \frac{1 - \norm{\vit}}{\norm{\vit}} \vit +\\
    \alpha \inv{\vdeg{i}} \sum_{j}{
      \adj{ij} \left( \tnvel{j}{t} - \tnvel{i}{t} \right)
    }\text{.}
\end{multline}

We are not interested in the evolution of the unnormalized velocities but in their
normalized forms.  To improve readability, we set
\begin{equation}
  \tsvel{i}{t} = \tnvel{i}{t}\text{ and }
\end{equation}
\begin{equation}
  \tsnorm{i}{t} = \norm{\vit}\text{.}
\end{equation}

Thus, the governing equations of the normalized velocities $\tsvel{i}{t}$ are
\begin{equation}
  \begin{dcases}
    \dd{\sit} &= \alpha\inv{\ki\nit}\sum_q{\giq\Big(\sqt - \big(\sit\cdot\sqt\big)\sit\Big)}\\
    \dd{\nit} &= 1 - \nit - \alpha\sum_q{\giq\Big(1 - \sit\cdot\sqt\Big)}
  \end{dcases}
\end{equation}
where $\cdot$ stands for the dot product operator.

\begin{theorem}
  The aligned state $\sit = \sjt$, for all particle $i$ that interacts with particle $j$,
  is stable in the sense of Lyapunov.
\end{theorem}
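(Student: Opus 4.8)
The plan is to exhibit an explicit Lyapunov function that measures the total misalignment across edges and to show that it is non-increasing along the flow. First I would confirm that the aligned configuration is an invariant set of equilibria for the directional dynamics: when $\sit = \sqt$ for every edge, the projected term $\sqt - (\sit\cdot\sqt)\sit$ vanishes identically, so $\dd{\sit} = 0$ for all $i$. Since the common direction is left unspecified, the aligned states form a manifold rather than an isolated point, so ``stable in the sense of Lyapunov'' must be read as stability of this set: trajectories starting with small pairwise misalignment should retain small misalignment for all time.

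For the candidate I would take the total edge misalignment
\begin{equation}
  V(t) = \frac{1}{2}\sum_{i}\sum_{q} \giq\left(1 - \sit\cdot\sqt\right) = \frac{1}{4}\sum_{i}\sum_{q}\giq\norm{\sit - \sqt}^2,
\end{equation}
where the second form uses $\norm{\sit} = \norm{\sqt} = 1$. This function is non-negative and vanishes exactly on the aligned set (on a connected component $V = 0$ forces all directions to coincide), so it serves as a legitimate distance-like function to the equilibrium manifold.

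The core step is differentiating $V$ along trajectories. Differentiating the dot product produces two cross-terms; exploiting the symmetry $\giq = \adj{qi}$ of the undirected adjacency and relabelling $i \leftrightarrow q$ collapses them into a single sum, yielding $\dd{V} = -\sum_i \dd{\sit}\cdot\vec{s}_i$ with $\vec{s}_i = \sum_q\giq\sqt$ the aggregate neighbour direction. Substituting the first governing equation gives $\dd{\sit} = \tfrac{\alpha}{\ki\nit}\big(\vec{s}_i - (\sit\cdot\vec{s}_i)\sit\big)$, which is precisely the orthogonal projection of $\vec{s}_i$ onto the tangent space of the unit sphere at $\sit$. Since the inner product of a vector with its own tangential projection equals the squared norm of that projection, I obtain
\begin{equation}
  \dd{V} = -\sum_i \frac{\alpha}{\ki\nit}\,\norm{\vec{s}_i - (\sit\cdot\vec{s}_i)\sit}^2 \leq 0,
\end{equation}
where positivity of each prefactor uses $\alpha > 0$, $\ki > 0$, and $\nit > 0$ (the last guaranteed by the second of the preceding lemmas). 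With $V \geq 0$ and $\dd{V}\leq 0$, the standard Lyapunov argument delivers stability of the aligned set.

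I expect the main obstacle to be the coupling to the norm dynamics: $V$ depends only on the directions $\sit$, yet $\dd{V}$ carries the factor $1/\nit$, so one must lean on the earlier lemmas to keep $\nit$ bounded strictly away from $0$ (and below $1$) along every trajectory, which is what makes each prefactor well-defined and positive. A secondary subtlety is that $\dd{V} = 0$ holds on the entire set of directional equilibria, including the unlikely anti-aligned configurations noted earlier; hence this argument yields Lyapunov stability but not asymptotic stability, and distinguishing the genuinely aligned equilibrium from the spurious ones would require a LaSalle-type refinement or linearisation that is not needed for the claim as stated.
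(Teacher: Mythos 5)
Your proposal is correct and follows essentially the same route as the paper: your $V$ is identical to the paper's energy $E$ (since $\norm{\sit}=1$ gives $\tfrac14\norm{\sit-\sqt}^2 = \tfrac12(1-\sit\cdot\sqt)$), and the derivative computation collapses to the same quantity, with your final step writing $\dd{V}$ as an explicit negative sum of squared tangential projections where the paper equivalently invokes Cauchy--Schwarz on $(\si\cdot\neii)^2 \leq \norm{\neii}^2$. Your added observations --- that the aligned states form an equilibrium manifold, that strict positivity of $\nit$ from the preceding lemmas is needed for the prefactors, and that $\dd{V}=0$ also on anti-aligned configurations so only set stability (not asymptotic stability) follows --- are accurate and consistent with the paper's remark.
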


\begin{proof}
  We define the energy function $E$ that reaches zero only when the aligned state is
  reached, and it increases as the velocities misalign,
  \begin{equation}
    E = \inv{4}\sum_i{\sum_j{\gij\big(\sj - \si\big)\cdot\big(\sj - \si\big)}}\text{.}
  \end{equation}

  Its derivative is
  \begin{equation}
    \dd{E} = \inv{2}\sum_i\sum_j{\gij\big(\sj - \si\big)\cdot\left(\dd\sj -
    \dd\si\right)}\text{,}
  \end{equation}
  but
  \begin{equation}
    \si\cdot\dd\si =
    \alpha\inv{\ki\nni}\sum_q{
      \giq\left(\si\cdot\sq - \si\cdot\sq\right)
    } = 0\text{,}
  \end{equation}
  then
  \begin{multline}
    \dd{E} = -\sum_i\sum_j{\gij \sj\cdot\dd\si} = \\
    -\sum_i\sum_j\sum_q{
      \alpha\gij\giq\inv{\ki\nni}\Big(\sj\cdot\sq - \big(\si\cdot\sq\big)\big(\si\cdot\sj\big)\Big)
    } = {}\\
    \sum_i\sum_j\sum_q{
      \alpha\gij\giq\inv{\ki\nni}\big(\si\cdot\sq\big)\big(\si\cdot\sj\big)
    } - {}\\
    \sum_i\sum_j\sum_q{
      \alpha\gij\giq\inv{\ki\nni}\sj\cdot\sq
    } = {}\\
    \alpha\sum_i\inv{\ki\nni}\sum_j{
      \gij\big(\si\cdot\sj\big)\sum_q\giq\big(\si\cdot\sq\big)
    } - {}\\
    \alpha\sum_i\inv{\ki\nni}\sum_j{
      \gij\sum_q\giq\big(\sj\cdot\sq\big)
    } = {}\\
    \alpha\sum_i\inv{\ki\nni}\sum_j{
      \gij\big(\si\cdot\sj\big)\si\cdot\sum_q\giq\sq
    } - {}\\
    \alpha\sum_i\inv{\ki\nni}\sum_j{
      \gij\sj\cdot\sum_q\giq\sq
    }\text{.}
  \end{multline}

  Let $\neii = \sum_q\giq\sq$ be the summation over all the velocities of the neighbors of
  particle $i$, then
  \begin{multline}
    \dd{E} = \alpha\sum_i\inv{\ki\nni}\sum_j{
      \gij\big(\si\cdot\sj\big)\big(\si\cdot\neii\big)
    } -\\
    \alpha\sum_i\inv{\ki\nni}\sum_j{
      \gij\big(\sj\cdot\neii\big)
    } = {}\\
    \alpha\sum_i\inv{\ki\nni}\big(\si\cdot\neii\big)\si\cdot\sum_j{
      \gij\sj
    } -
    \alpha\sum_i\inv{\ki\nni}\neii\cdot\sum_j{
      \gij\sj
    } = {}\\
    \alpha\sum_i\inv{\ki\nni}\big(\si\cdot\neii\big)\big(\si\cdot\neii\big) -
    \alpha\sum_i\inv{\ki\nni}\big(\neii\cdot\neii\big)\text{.}
  \end{multline}

  But, given any vectors $\vec{a}, \vec{c} \in \mathcal{R}^D$, we have
  $
    \vec{a}\cdot\vec{c} \leq \norm{\vec{a}}\norm{\vec{c}}
  $
  and
  $
    \vec{a}\cdot\vec{a} = \norm{\vec{a}}^2
  $,
  \begin{multline}
    \dd{E} =
    \alpha\sum_i\inv{\ki\nni}\big(\si\cdot\neii\big)^2 -
    \alpha\sum_i\inv{\ki\nni}\big(\neii\cdot\neii\big) \leq {}\\
    \alpha\sum_i\inv{\ki\nni}\big(\cancelto{1}{\norm{\si}}\norm{\neii}\big)^2 -
    \alpha\sum_i\inv{\ki\nni}\norm{\neii}^2 \leq 0
  \end{multline}
\end{proof}

\begin{remark}
  Perfect alignment most likely happens since $\dd{E}$ is zero only if $x_i$ and
  $\neii$ are codirectional, that is, when all neighbors are either aligned or
  opposed to each other.  While this condition does not hold, $\dd{E} < 0$, and
  every particle keeps trying to align with its neighbors.
\end{remark}


\section{Experimental results}
\label{sec:experimental}

In this section, we present an extensive set of experimental results conducted on
different classes of computer-generated and real-world networks.

\subsection{Illustrative example} 
\label{sub:illustration}

In this subsection, we present illustrative examples of the application of
our method.
Input networks are built by employing the following methodology:
\begin{enumerate}
  \item Given a desired average vertex degree $\langle k
    \rangle_{\mathrm{des}}$, each vertex $i$ randomly chooses $\langle k
    \rangle_{\mathrm{des}} / 2$ vertices $j$ to connect to. For each $j$ to be
    selected, $j$ is taken from the same community of $i$ with a probability
    $\pin$, or accordingly taken from a different community with a
    probability $p_{\text{out}} = 1 - \pin$. The selection of the same
    vertex $j$ twice or more is allowed, as it is also allowed for $i$ to
    connect to itself. Also, $j$ being selected by $i$ does not prevent $i$
    being also selected by $j$.
  \item After establishing all the connections, the network is simplified, in
    the sense that loops (self-connections) are removed and multiple edges that
    connect the same pair of vertices become a single, undirected edge. As a
    result, the actual average degree $\langle k \rangle$ may be reduced, but
    for large networks, it remains very close to the desired degree $\langle k
    \rangle_{\mathrm{des}}$.
\end{enumerate}

\begin{figure}
  \centering
  \includegraphics[width=\columnwidth]{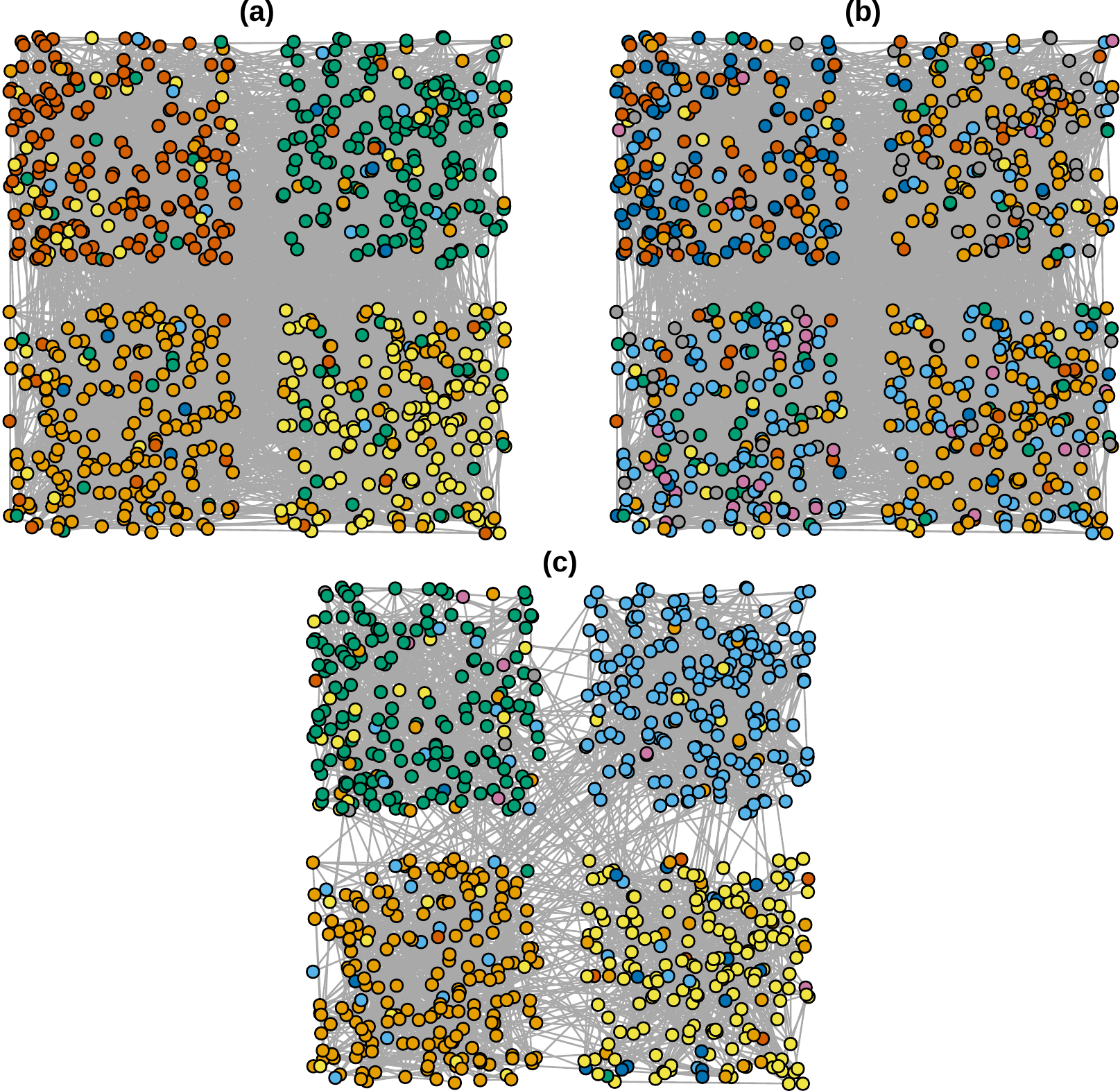}
  \caption{
    Overview of the community-detection results.  The input network has 4
    communities, each one made of 200 vertices, $\langle k \rangle \approx 10$,
    and $\pin = 0.66$.  The results of applying CFG (a) and Louvain (b)
    algorithms on the original network are shown. Communities are represented by
    different colors.  In the last figure (c), we display the results after
    applying 1766 rounds of the proposed iterative edge-removal process. In this
    case, communities are just different connected components. In each cycle, 10
    independent runs are performed and the most misaligned edge is removed.
    Only the remaining edges are shown in (c).  We set $\alpha = 0.1$.
  }
  \label{fig:overview}
\end{figure}

For comparison purpose, in \cref{fig:overview}, we also show the results of
applying CFG and Louvain algorithms on a network with 4 communities, average
degree $\langle k \rangle \approx 10$ and $\pin = 0.66$.
In this case, CFG
achieves modularity $Q = 0.33$ and Louvain, $Q = 0.31$.
We apply our method in this network, setting $\alpha = 0.1$.  In each round, we
run the dynamical system until $t=100$ with $10$ independent initial
configurations.  The most misaligned edge per round is removed until there is
no more edges to remove.
We choose the partition that yields the best modularity.
The results of the iterative edge-removal process are better than those of CFG
and Louvain, reaching $Q = 0.38$.
%

\begin{figure}
  \centering
  \includegraphics{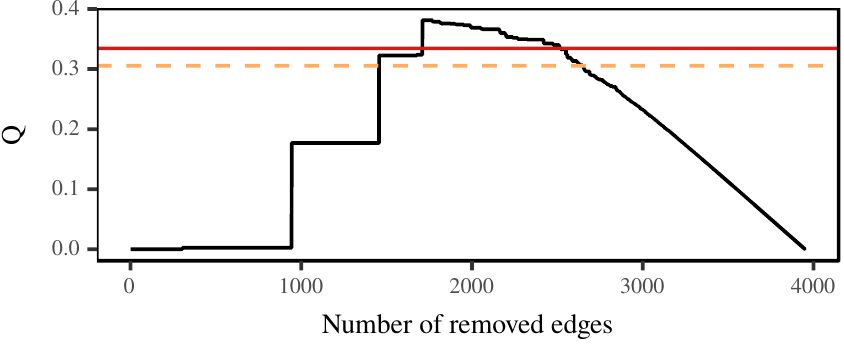}
  \caption{
    Evolution of the modularity $Q$ over the rounds of the proposed iteration
    edge-removal process.  The input network and learning configuration are the
    same presented in \cref{fig:overview}.  Solid and dashed horizontal lines
    correspond to the modularity achieved by CFG and Louvain, respectively.
  }
  \label{fig:modularity}
\end{figure}

\Cref{fig:modularity} shows the modularity evolution over different rounds.
Since communities are defined by connected components, the first edge-removal
rounds just result in a single community, making the modularity score be very
low. Those abrupt transitions reveal different components becoming completely
disconnected from each other, i.e., they reveal those rounds in which the last
edge that links two large components is removed. After achieving four major
communities --- and consequently the highest score ---, further removal starts
destroying them. Such a local optimum partition is the one that should be
returned by the proposed iterative edge-removal method.


\begin{figure}
  \centering
  \includegraphics{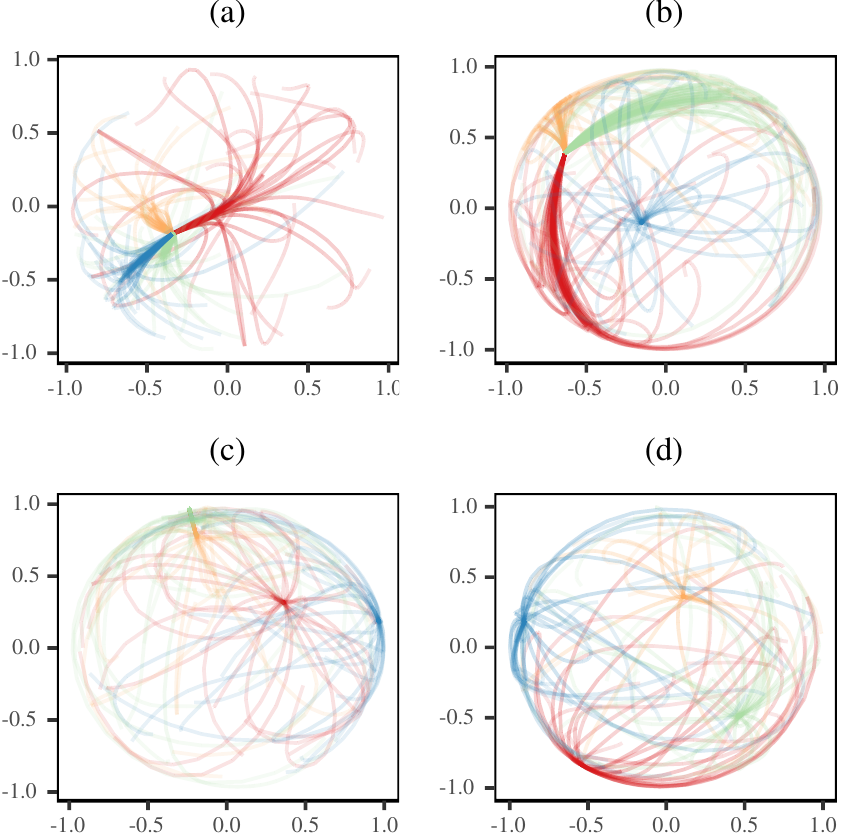}
  \caption{
      Evolution of the dynamics governed by Equation \ref{eq:v} in a network with
      $800$ nodes, average degree $\langle k\rangle = 10$, $4$ communities, and
      $\pin = 0.9$.  For better visualization, even though the system has
      been simulated with $D = 3$ dimensions and $800$ particles, only $200$
      random-selected normalized velocity vectors are shown projected in two
      dimensions.  $1000$ system iterations are run in each plot and the path of the
      velocity vectors are more opaque in the last iterations.  Path colors match
      the four different communities.  In plot (a), no edge was removed.
      In the remaining plots, 200 (b), 400 (c) and 600 (d) edges have been removed
      iteratively.
  }
  \label{fig:dynamics}
\end{figure}

To further illustrate our method, we provide another simulation.  We show (in
\cref{sec:theory}) that all the vertices in a connected component will align as time tends to
infinity.  As a result, one might wonder how our edge-removal strategy works in
practice.  To demonstrate its effectiveness, the method is applied on a simpler
community detection problem: detecting the $4$ communities with the same size in a
random clustered network that comprises $800$ nodes with average degree $\langle
k \rangle = 10$.  Connections are randomly assigned such that every node has
probability $\pin = 0.9$ of connecting to another node in its community.  We run
our method with $3$-dimensional velocity vectors and $\alpha = 0.05$.
For the sake of visualization, \cref{fig:dynamics} depicts the evolution
of the normalized velocity vectors (projected in two dimensions) in different
situations.  Line colors represent communities, and the transparency decreases
in function of time $t$.
Plot (a) depicts $1000$ iterations of our dynamical system in the original
network.  One can notice that all particles are aligned, but the velocity
vectors of particles in different communities converge from different
directions.  This phenomenon is utterly important, since it enables us to
distinguish the communities.  It also explains the difference between our method
and Kuramoto-based ones\cite{Oh2005,Arenas2008}.  In the
synchronization-based techniques, each element usually is a fixed low
dimensional dynamical system.  In the Kuramoto oscillator model, only a single
real value is associated to each node, which corresponds to the phase.  Thus,
nodes can only synchronize ``from two different directions''.  Using three or
more dimensions in our method, we bring an infinitude of possible directions.
In the same figure, we also show three snapshots of the edge-removal process.
After running up to $t=1000$, we remove the $10$
edges with highest values of \mcf.  We repeat this process until $600$ edges
have been removed.  The evolution of the normalized velocity vectors at $t = 1, 2,
\dots, 1000$ after removing $200$, $400$, and $600$ edges are illustrated in
subplots (a), (b), and (c), respectively.
After the removal of $200$ edges, we observe that one of the communities
disconnects from the others, becoming a single connected component, and thus,
the velocity vectors converge to a different point.  With $400$ edges removed,
another community detaches.  And finally, after $600$ removals, each
community becomes a connected component of the network, achieving our goal.

\subsection{Analysis of the evolution of misalignment coefficient}
\label{sub:misalign}

\begin{figure}
  \includegraphics{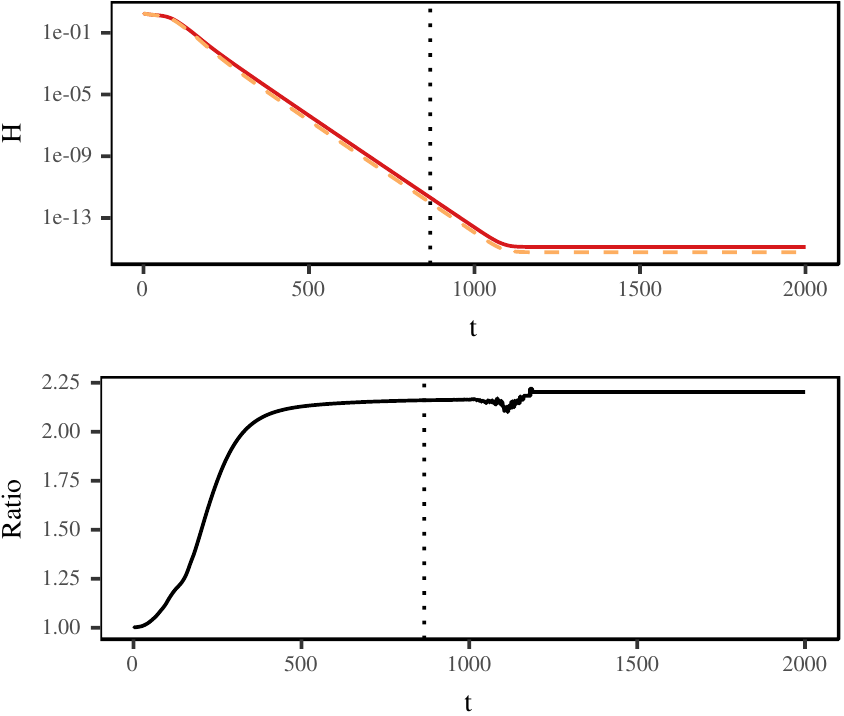}
  \caption{
    \Mcf of the network presented in \cref{fig:overview}: 4
    communities, each one made of 200 vertices, $\langle k \rangle \approx 10$,
    and $\pin = 0.66$. The population of coefficients is obtained from 10
    independent runs. Parameter $\alpha = 0.1$.  Measurements are taken from a
    single round.  Top plot: average \mcf evolution separated between intra-
    (dashed line) and inter-community (solid line) edges.
    Bottom plot: ratio between the \mcf of inter- and intra-community edges.
  }
  \label{fig:entropy}
\end{figure}

In the previous section, we claimed that the \mcfs of edges connecting distinct
communities become usually higher than those connecting vertices inside the same
community. Let us now present, in \cref{fig:entropy}, how
\mcfs change over time.

In order to reduce the dependence on the initial condition -- random assignment
of velocities --, the population of \mcf values is obtained from 10 independent
runs.  The input network is the same for all of these runs.  The network has 800
vertices and $\langle k \rangle \approx 10$, what gives a total of approximately
4000 edges.

In the top plot of \cref{fig:entropy}, we plot the average \mcf $H$ grouped by
intra- and inter-edges.  As we can see, coefficients fall down quickly.
Intra-edges, however, align
faster than the inter-edges.  In the bottom plot, we show the ratio between
intra- and inter-edges.  As expected, a greater proportion of intra-edges have
lower misalignment coefficient.  Moreover, after many iterations, the velocity
vectors become almost identical for every pair of connected vertices.  At this
point, the finest machine representation of real numbers is reached.  The
vertical dotted line indicates the iteration $t$ in which the average
misalignment is lower than $10^{-12}$.  Any result beyond this point might be
meaningless.  Consequently, we should always stop the system earlier.

\subsection{Analysis of the number of removed edges}

Results of removing a different number of edges per round are presented in this section.
An evaluation index is used in order to objectively quantify the accuracy of the
set of obtained communities. Specifically, we selected the adjusted Rand index
(ARI), which is the
corrected-for-chance version of the Rand index~\cite{hubert:1985}.  It measures the
similarity between the partition obtained from some algorithm and a reference
partition. ARI generates values between $-1$ and $1$. If two partitions match
perfectly each other, it results in a value $1$. On the other hand, it ensures a
value close to $0$ for a randomly-labelled partition or a partition that assigns
all the elements into a single group, given that the reference partition has
more than one group, of course. Negative values stand for some anti-correlation
between the pair of partitions.

Real-world applications, however, do not provide such a reference partition, so
we also employ the \emph{modularity} score.  Unlike ARI, the modularity score does
not make any assumption on \textit{a-priori} knowledge of the network, e.g.,
vertex labels. By placing the ARI score against the modularity score, we can
check for any relationship between them, i.e., check for a relationship between
an information available in a real application (modularity) and a robust measure
based on reference partitions (ARI).

\begin{figure}
  \centering
  \includegraphics[width=\columnwidth]{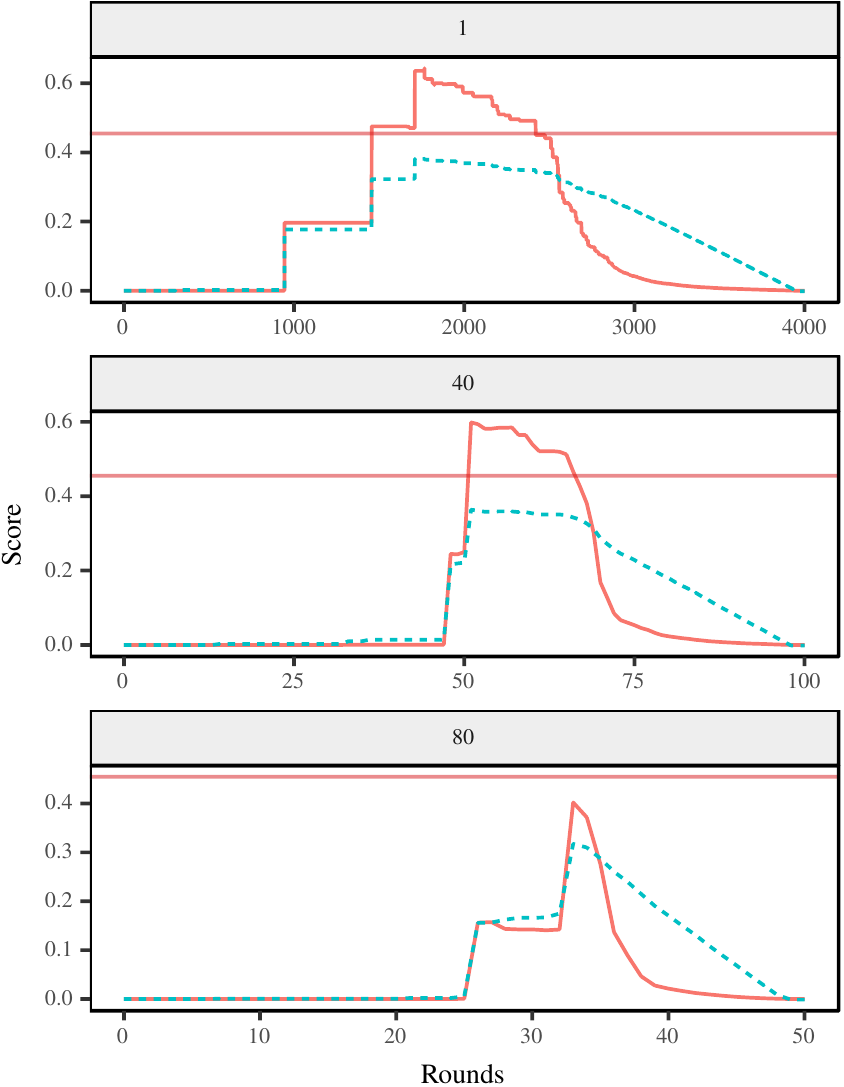}
  \caption{
    ARI (solid lines) and modularity (dashed line) scores along different
    rounds.  The input network is exactly the same one depicted in
    \cref{fig:overview}: 4 communities, each one made of 200 vertices, $\langle
    k \rangle \approx 10$, and $\pin = 0.66$.  Plots are arranged according to
    the amount of edges removed per round: 1, 20, and 80.  For all runs,
    we set $\alpha = 0.1$ and the system runs until $t = 100$.
    The solid horizontal line indicates the ARI score obtained by the CFG
    method.
  }
  \label{fig:nedges}
\end{figure}

Firstly, let us consider the same network presented in \cref{fig:overview}, with
$\pin = 0.66$.  ARI and modularity scores along different rounds are presented in
\cref{fig:nedges}, arranged according to the number of edges removed per
round.

Even when 40 edges ($\approx 1\%$ of the total number of edges) are removed per
round, our method achieves higher ARI score than the CFG method: $0.64$ (1
edge) and $0.60$ (40 edges) against $0.45$.
Removing fewer edges per round is slightly
better.  Removing a larger fraction of the edges is less expensive though, for it
demands fewer rounds to complete.  However, we experience a significantly drop
in the score if the fraction of removed edges per round is too big (around 2\%.)



Moreover, a desirable relationship is noticeable here: the highest
modularity matches relatively high ARI scores.  This
result is useful for establishing when we should stop the edge-removal process
and where the optimal round is, i.e., the round after which we are likely to
find a good partition: stopping the process just before the modularity starts
decreasing and taking such highest-modularity's network, using the connected
components as the final set of communities. Although the results of
\cref{fig:nedges} come from just one network instance, the same overall pattern
is still present in other instances, even using different network parameters, as
we are going to see in the remaining results presented in this paper.

\begin{figure}
  \centering
  \includegraphics{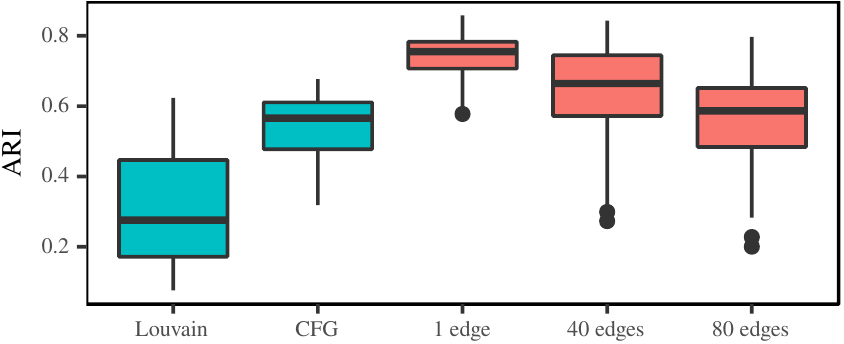}
  \caption{
    Results of removing 1, 40 and 80 edges per round on a population of 50
    different networks. Each network has 4 communities, each one made of 200
    vertices, $\langle k \rangle \approx 10$, and $\pin = 0.66$. Boxes in blue
    are the results of applying CFG and Louvain on the network.  The remaining
    boxes (in red) are obtained by applying the proposed technique removing
    different amounts of edges.  The partition that yields the best modularity
    is chosen.  10 runs per round are performed, and parameter $\alpha = 0.1$
    for all simulations.  We stop the system at $t = 100.$
  }
  \label{fig:nedges_sym}
\end{figure}

In \cref{fig:nedges_sym}, we present a comparison between removing 1, 40, and 80
edges per round on a population of 50 different networks.
In fact, removing a smaller fraction of the edges yields better results.  In
addition, the variability of results is reduced when removing less edges per
round. However, the proposed edge-removal process leads to
considerably better results compared to the application of CFG or Louvain on the
original network.

\subsection{Computational complexity}
\label{sub:complexity}

The edge-removal approach presented here is somewhat similar to the
edge-betweenness-based algorithm proposed by Newman and
Girvan~\cite{newman:2004}. The \mcf of an edge measures how different
the incident vertices are in terms of their velocity vector. Edge
betweenness, in turn, measures the relevance of an edge in terms of the number
of shortest paths that include such an edge. Both concepts ultimately try to
identify edges that connect distinct communities. An important drawback of the
edge-betweenness approach is, however, its cubic time complexity. Precisely, for
a network of $m$ edges and $n$ vertices, the time complexity is $O(m^2n)$, or
$O(n^3)$ for sparse networks, in which case $m \sim n$~\cite{clauset:2004,
newman:2004, lancichinetti:2009, fortunato:2010}.  Therefore, it is necessary to
have a discussion about the complexity of the approach proposed
here.

Four variables are relevant to our time-complexity analysis:
the number $n$ of vertices;
the number $m$ of edges;
the total amount of time-steps $t_\text{total}$ per run;
and the number of edge-removal rounds $n_\text{rounds}$ required to obtain an
appropriate partition of the network.
The dimensionality of the velocity vectors is constant, three dimensions are
employed in our experiments, so it is not relevant to this analysis. Our
hypothesis is that both $t_\text{total}$ and $n_\text{rounds}$ can be invariant
no matter how big the network is.

In the case of removing just a single edge per round, $n_\text{rounds}$ gets the
order of $m$. However, the experiments presented in the previous section
indicate that it is still possible to obtain satisfactory results by removing a
fraction of all the edges per round, in which case $n_\text{rounds}$ becomes
constant. For instance, removing approximately 1\% of the edges per round
requires around 50 rounds to achieve good results. Also, concerning the
number of runs per round, if one decides to perform a set of runs per round in order to reduce
the influence of the initial random configuration, the number is still
relatively small and does not scale with neither $n$ nor $m$.

In order to provide evidence that supports our hypothesis --  neither
$t_\text{total}$ nor $n_\text{rounds}$ scales with the network size --, we
present some results on bigger and denser networks.
Particularly, a network that has 10 times more vertices than those
studied in the previous section and a network that has twice more edges while
keeping the same number of vertices.

\begin{figure}
  \centering
  \includegraphics[width=\columnwidth]{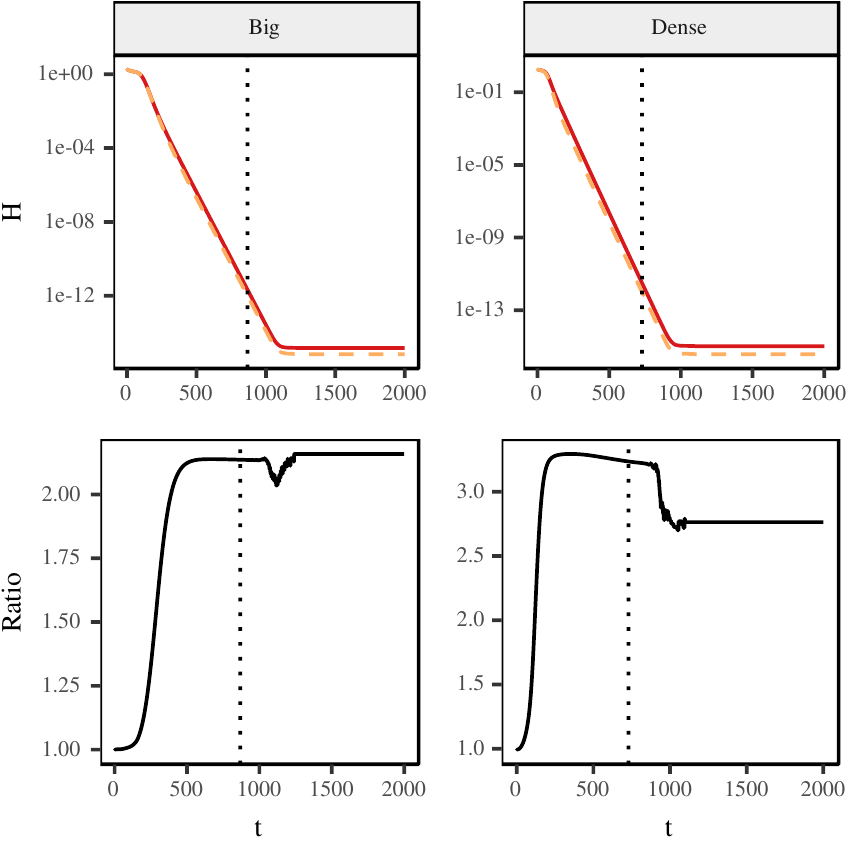}
  \caption{
    \Mcf of a big and a dense network: both of them with 4 communities and $\pin
    = 0.66$.  The big network has 2000 vertices in each community, while the
    dense has 200.  The average degree in the big network is $\adeg \approx 10$,
    while in the dense one,  $\adeg \approx 20$.  The population of coefficients
    is obtained from 10 independent runs. Parameter $\alpha = 0.1$.
    Measurements are taken from a single round.  Top plots: average \mcf
    evolution separated between intra- (dashed line) and inter-community (solid
    line) edges.  Bottom plots: ratio between the \mcf of inter- and
    intra-community edges.
  }
  \label{fig:bignet}
\end{figure}

In \cref{fig:bignet}, we present the evolution of the coefficients in the first
round.  We notice that increasing the network size or the density has small
effects on the amount of
time-steps needed to obtain satisfactory separation.
Interestingly, the bigger network requires a little more iterations while the
denser one, a little less.
Therefore we can estimate that, for an arbitrarily large network,
$t_\text{total}$ remains almost the same, or at least do
not scale linearly with the network size.

\begin{figure}
  \centering
  \includegraphics[width=\columnwidth]{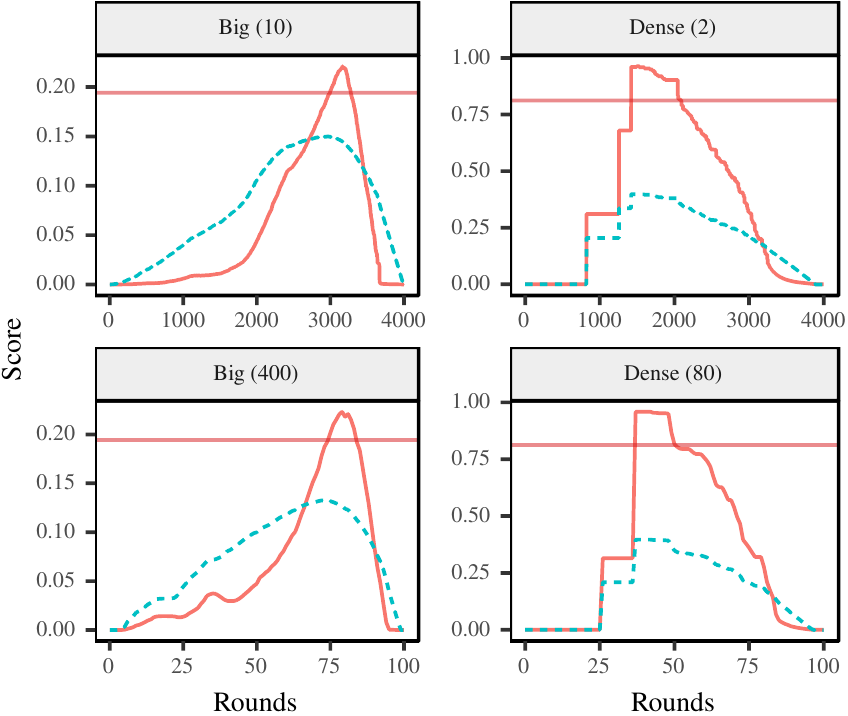}
  \caption{
    ARI (solid lines) and modularity (dashed line) scores along different
    rounds.  Input networks are exactly the same ones depicted in
    \cref{fig:bignet}.  Plots are arranged according to the amount of edges
    removed per round: $\approx 0.025\%$ and $1\%$.  The exact number of
    edges removed per round is indicated between parentheses.  For all runs, we
    set $\alpha = 0.1$ and the system runs until $t = 80$ for the dense network
    and $t = 250$ for the big network.  The solid horizontal line indicates the
    ARI score obtained by the CFG method.
  }
  \label{fig:bignet_nedges}
\end{figure}

Furthermore, we analyse the scores along different rounds by running up to the
iteration $t=80$ for the dense network and $t=250$ for the big network.  Results
are plotted in \cref{fig:bignet_nedges}.  Both the \mcf evolution and the ARI score
follow similar shapes compared with the results of smaller networks presented in
the previous sections.

Since the summation in Equation~\ref{eq:v} takes place on the set of
edges, the time complexity for a single dynamical iteration $t$ is $O(m)$.
Seeing that $t_\text{total}$ does not scale with the network size, the entire
run still takes $O(m)$. In addition, at the end of each round (a single
run or a small set of runs), we need to target the
highest-misaligned
edges, what requires sorting the set of edges and usually takes $O(m \,
\text{log} \, m)$. Since $n_\text{rounds}$ does not scale with the network size
either, then all the edge-removal process has a quasilinear time complexity:
$O(m \, \text{log} \, m)$ if $m \gg n$, or just $O(n \, \text{log} \, n)$ for
sparse networks.

However, it is important to emphasize that the cost $O(m \, \text{log} \, m)$
holds only from the asymptotic point of view, when networks become very large.
For small- or medium-sized networks, the cost caused by $t_\text{total}$ and
$n_\text{rounds}$ may become noticeable.

\subsection{Experiments on unbalanced-communities networks}

\begin{figure}
  \centering
  \includegraphics[width=\columnwidth]{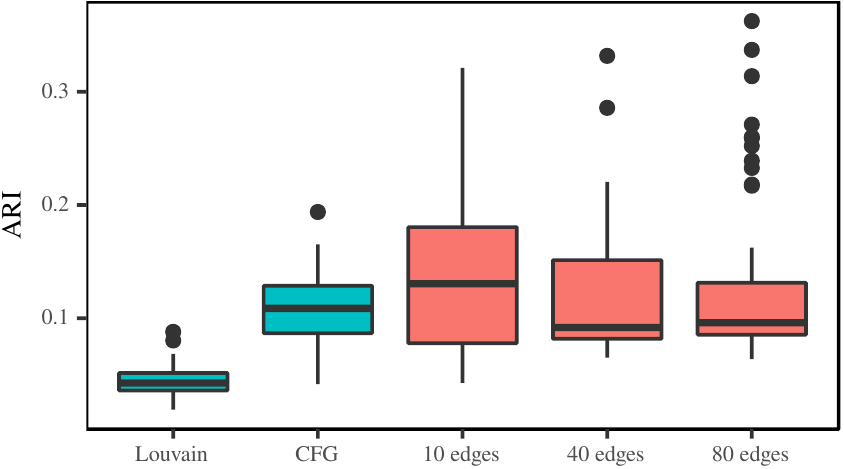}
  \caption{
    Results of removing 10, 40 and 80 edges per round on a population of 50
    different networks. Each network has 4 communities, containing 800, 400,
    200, and 100 vertices, $\adeg \approx 10$, and $\pin = 0.66$.  Boxes in blue
    are the results of applying CFG and Louvain on the network.  The remaining
    boxes (in red) are obtained by applying the proposed technique removing
    different amounts of edges.  The partition that yields the best modularity
    is chosen.  10 runs per round are performed, and parameter $\alpha = 0.1$
    for all simulations.  We stop the system at $t = 250$.
  }
  \label{fig:scores_asymmetric}
\end{figure}

All the experiments presented so far are performed on a class of balanced
networks, whose communities have the same size. In order to provide
more-realistic examples, let us introduce now some results on networks whose
communities have different number of vertices. These results are shown in
\cref{fig:scores_asymmetric}.

We notice now a considerable drop in the quality of the partitions. Still,
compared to traditional algorithms like CFG or Louvain applied to the original
network, the proposed edge-removal process performs well.  The performance of the
proposed method, however, varies greatly.  One possible explanation is that the
fixed number of iterations $t=250$ is not ideal.  The study of heuristics for
the system's stop condition are left for future works.


\subsection{Experiments on Lancichinetti benchmark}

Since real-world networks usually have heterogeneous degree distribution, we
also apply our technique on the benchmark of Lancichinetti \emph{et
al.}\cite{Lancichinetti:2008}.  Such benchmark produces networks in which both
degree and community size distributions follow power law functions with
arbitrary exponents.  Motivated by typical values found in natural systems, we
choose exponent $2$ for the degree distribution and $1$ for the size of
communities.  The generated networks have a mixing parameter $\mu$ that
controls the fraction of links between nodes of different communities.
To assess our method, we use networks with $1000$ nodes and average degree
$10$, varying the mixing parameter $\mu$ between $0.1$ and $0.6$.

We compare our method against CFG and Louvain in $30$ independent trials.  The
performance is measured in terms of the normalized mutual information index,
which measures the similarly of the predicted partition against the expected
one.  Moreover, such index is suggested by the authors of the benchmark.  CFG
and Louvain have no parameter, while in our technique we set $\alpha = 0.05$.
For stopping criterion, we run the system until the maximum change in each
projection $\hat{v}_{i,d}$ is less than $10^{-3}$ and remove the edge with
higher \mcf.  We interpret each connected component as a community.  We repeat
the edge-removal process until the modularity of the partitioning starts
decreasing.

\begin{figure}
  \centering
  \includegraphics{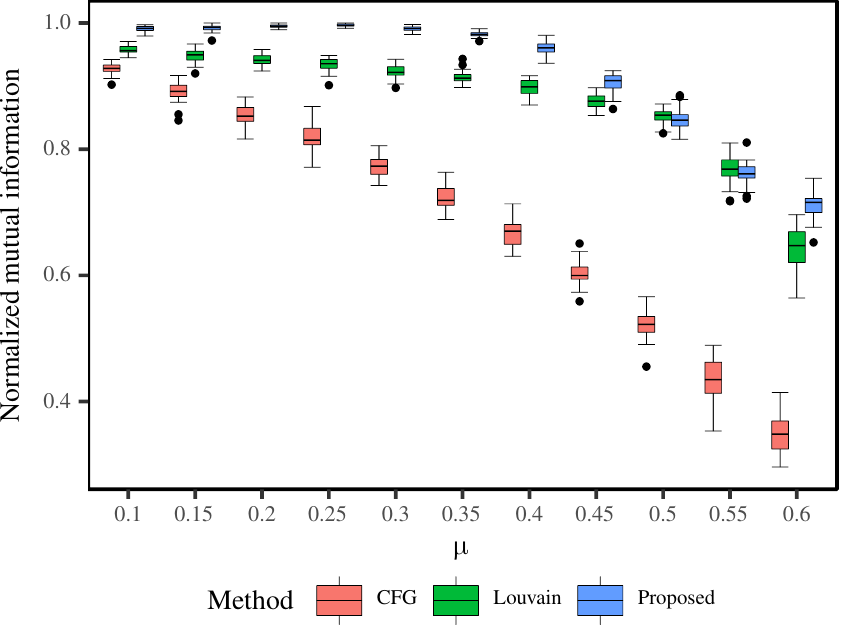}
  \caption{
    Comparison of the normalized mutual information index obtained by three
    community detection algorithms---our proposed method, CFG and Louvain---in
    the Lancichinetti \emph{et al.} benchmark\cite{Lancichinetti:2008}.  Results
    are taken from 30 independent trials for each value of the mixing parameter
    $\mu$.  Networks contain $1000$ nodes, average degree $10$.  Degree and
    community size distribution follow power law functions with exponents $2$
    and $1$, respectively.
  }
  \label{fig:lanc}
\end{figure}

\Cref{fig:lanc} reveals that our method performs significantly better
than CFG and Louvain.  Beyond the mark $\mu = 0.5$,
communities are no longer defined in a strong sense since each node has more
neighbors in other communities than in its own.

\subsection{Experiments on real-world datasets}

We apply our algorithm to four well-known real-world networks: Zachary's karate
club~\cite{Zachary1977}, the bottlenose dolphins social
network~\cite{Lusseau2003}, the American College Football~\cite{Girvan2002}, and
the Krebs’ books on US politics\footnote{This network is unpublished and can be
found at
\texttt{http://www-personal.umich.edu/{\raise.17ex\hbox{$\scriptstyle\sim$}}mejn/netdata/}}.

\begin{table}
  \centering
  \caption{
    Modularity score of community detection methods in 4 classic real-world
    datasets.  Optimal modularity is shown for comparison.
  }
  \begin{tabular}{c c c c c}
    \toprule
    Method & Dolphin & Football & Karate & Political Books \\
    \midrule
    Proposed                   & 0.529 & 0.605 & 0.419 & 0.527 \\
    Amiri~\cite{Amiri2013}     & 0.515 & 0.597 & 0.417 & 0.518 \\
    Honghao~\cite{Honghao2013} & 0.529 & 0.605 & 0.420 & 0.527 \\
    Song~\cite{Song2016}       &   -   & 0.531 & 0.362 & 0.463 \\
    \midrule
    Optimal & 0.529 & 0.605 & 0.420 & - \\
    \bottomrule
  \end{tabular}
  \label{tab:real}
\end{table}

\Cref{tab:real} presents the best modularity scores obtained by our method.  We
set $\alpha = 0.1$, $30$ independent runs, and vary the fraction of removed
edges in $1\%$, $2\%$, \dots, $10\%$.  Also, we vary the number of iterations $t
\in [30, 70]$.  We compare our results against three other bio-inspired
optimization methods.  (Missing results have not been measured by the authors of
the original paper.)  Also, the optimal value of modularity is calculated using
an exhaustive search in all possible partitions.  Modularity optimization is an
NP-complete problem, and known algorithms have exponential time
complexity~\cite{newman:2004}.  (We were not able to find out the optimal
partition of the \emph{political books} network.)

We observe that our method reached either optimal or nearly-optimal modularity
scores.  Although the ant-colony-based technique~\cite{Honghao2013} had similar
performance, our algorithm has lower computational cost because it explores the
solution space in a greedy manner.


\section{Conclusions}
\label{sec:conclusions}

Throughout an extensive set of experiments presented in this paper, we see that
the proposed flocking-like dynamical system and the iterative edge-removal
process performs well in many scenarios. The decentralized, self-organizing dynamical model is
robust, thus applicable on a wide variety of networks.

The concept of \mcf defined here is, in some sense, similar to the
concept of edge betweenness. High-misaligned edges are supposed to link distinct
communities. However, the cost $O(n^3)$ (on sparse networks) of the method
proposed by Newman and Girvan~\cite{newman:2004} can be prohibitive for its
application in large networks. On the other hand, we claim that our
edge-removal process is asymptotically quasilinear: $O(n \,
\text{log} \, n)$, which is quite attractive for large-network community
detection.

In further works, we will study heuristics to find out good values of the number
of iterations and removed edges per round.

\section*{Acknowledgments}

This research work was supported by the State of São Paulo Research Foundation
(FAPESP) (Projects 13/08666-8, 13/25876-6, and 15/50122-0), by the Brazilian
National Research Council (CNPq) (Project 303012/2015-3), and by the German
Research Foundation (DFG) (Project IRTG/GRK 1740).

R.A. Gueleri and F.A.N. Verri programmed and performed the computer experiments,
as well as produced the resultant plots and figures. F.A.N. Verri performed the
analytical study.  All authors contributed on conceiving the method, planning
the experiments, and writing this manuscript. L. Zhao, in addition, supervised all
this research work.

\bibliographystyle{IEEEtran}
\bibliography{IEEEabrv,references}

\end{document}